\newtheorem{theorem}{Theorem}
\newtheorem{definition}{Definition}
\newtheorem{example}{Example}
\newcommand{\setof}[2]{\ensuremath{\{ #1 \mid #2 \}}}
\newcommand{\set}[1]{\ensuremath{\{ #1 \}}}
\newcommand{\ra}{\rightarrow}
\newcommand{\pow}[1]{\mathcal{P}(#1)}
\newcommand{\Nat}{\ensuremath{\mathbb{N}}}
\newcommand{\defeq}{\stackrel{\mathrm{def}}{=}}
\newcommand{\dom}[1]{\mbox{dom}(#1)}
\newcommand\lb {[\![}
\newcommand\rb{]\!]}
\newcommand{\sem}[1]{\lb #1 \rb}
\newcommand{\wbis}{\approx}
\newcommand{\newcondinfrule}[3]
           {\parbox{5.5cm}{$$ {\frac{#1}{#2}}{\qquad
            #3} \hfill  $$}}
\newcommand{\infrule}[2]
           {\parbox{4.5cm}{$$ \frac{#1}{#2}\hspace{.5cm}$$}}
\newcommand{\runa}[1]{\textsc{(#1})}
\newcommand{\Names}{\ensuremath{\mathcal{N}}}
\newcommand{\Nil}{\mathbf{0}}
\newcommand{\nil}{\mathbf{0}}
\newcommand{\para}{\mid}
\newcommand{\locate}[2][l]{#1[ #2 ]}
\newcommand{\inp}[2]{#1(#2).}
\newcommand{\binp}[2]{#1 \, #2 \,}
\newcommand{\outp}[2]{{\overline{#1}}\langle #2 \rangle.}
\newcommand{\aoutp}[2]{{\overline{#1}}\langle #2 \rangle} 
\newcommand{\bang}{!\;}
\newcommand{\new}[2][\relax]{(\nu_{#1} #2)}
\newcommand{\patt}[2]{\ensuremath{(\lambda #1 #2)}}
\newcommand{\fn}[1]{\mathrm{fn}(#1)}
\newcommand{\bn}[1]{\mathrm{bn}(#1)}
\newcommand{\near}{\triangleright}
\newcommand{\off}{\not\near\,}
\newcommand{\tabref}[1]{Table \ref{#1}}
\newenvironment{stretching}{}{}
\newcommand{\barb}[1]{\downarrow_{#1}}
\newcommand{\bbc}{BBC\xspace}
\newcommand{\conn}{\mathcal{C}\xspace}
\newcommand{\igno}{\mathcal{D}\xspace}
\newcommand{\Tenv}{\ensuremath{\Gamma}\xspace}
\newcommand{\btype}[2]{\textsf{Ch}_{\mathsf{#2}}(#1)}
\newcommand{\broad}{\mathsf{B}}
\newcommand{\collect}{\mathsf{C}}
\newcommand{\cpatt}[2]{[ \lambda #1#2 ] }
\newcommand{\mset}[1]{\{ \! | \, #1 \, | \! \}}
\newcommand{\flatten}[1]{\textsf{flat}(#1)}
\newcommand{\locs}{\ensuremath{\mathcal{L}}}
\newcommand{\wbisdot}{\ensuremath{\stackrel{\cdot}{\wbis}}}
\title{Broadcast and aggregation in \bbc}
\author{Hans H\"{u}ttel
\institute{Department of Computer Science, Aalborg University, Denmark}
\email{hans@cs.aau.dk}
\and
Nuno Pratas
\institute{Department of Electronic Systems, Aalborg University, Denmark}
\email{\quad nup@es.aau.dk
}}
\begin{document}
\maketitle

\begin{abstract}

In distributed systems, where multi-party communication is essential, two communication paradigms are ever present: (1) one-to-many, commonly denoted as \emph{broadcast}; and (2) many-to-one denoted as \emph{aggregation} or \emph{collection}.

In this paper we present the \bbc process calculus, which inherently models the  \emph{broadcast} and \emph{aggregation} communication modes. We then apply this process calculus to reason on hierarchical network structure and provide examples on its expressive power. 
\end{abstract}

\section{Introduction}

In the setting of distributed systems, interprocess communication will
often be in the form of multi-party communication. Here there are two
dual paradigms: that of one-to-many communication, commonly denoted as
\emph{broadcast}, and that of many-to-one, denoted as
\emph{aggregation} or \emph{collection}.

Many-to-one communication is a commonly occurring phenomenon and often
occurs in settings with intricate network topologies. As an example
consider the case of a smart grid with thousands of meters (such as
electricity, water, heating, etc \ldots) that report their current
measurements.  A central server somewhere will collect these data;
however, the server will usually not be directly reachable from the
meters -- there will be several intermediate hops, each one facilitated
by a relay. If the underlying protocol of the network is correct, we
would expect this complicated network to be semantically equivalent to
a simple planar network, where each node is at one-hop distance from the
server.

There has already been considerable interest in understanding the
semantic foundations of one-to-many communication in the process calculus
community, starting with the work by Prasad
\cite{Prasad95} on a broadcast version of CCS. Later,
Ene and Muntean extended the notion to a broadcast version of the
$\pi$-calculus~\cite{Ene:2001:BCC:645609.662485} and proved that this
notion is strictly more expressive than
standard $\pi$-calculus synchronization.

Other process calculi with a notion of broadcast arise
in the search for behavioural models of protocols for wireless
networks. Singh et al. describe a process calculus with localities \cite{SRS:SCP10}, and Kouzapas and
Philippou \cite{kouzapas2011} introduce another process calculus with
a notion of localities, whose configuration can evolve dynamically.

In this paper we introduce a process calculus \bbc that has both forms of communication. For both many-to-one
and one-to-many communication, it is often a natural assumption that
communication is \emph{bounded}; this reflects two distinct aspects of
the limitations of a medium. In the case of broadcast, the bound
limits the number of possible recipients of a message. In the case of
collection, the bound limits the number of messages that can be
received. For this reason, \bbc uses a notion of bounded broadcast and
collection. Moreover, the syntax of the calculus introduces an
explicit notion of connectivity that makes it possible to represent a
communication topology directly. By using a proof technique introduced
by Palamidessi \cite{palamidessi} we show that even a version of \bbc
that only uses collection is more expressive than the $\pi$-calculus.

The remainder of our paper is organized as follows.  In Section
\ref{sec:bbc} we introduce the syntax of \bbc, while Section
\ref{sec:semantics} gives a reduction semantics.  In Section
\ref{sec:bisim} we introduce a notion of barbed bisimilarity, and in
Section \ref{sec:case} we use this notion to prove the correctness of
a protocol that uses both broadcast and collection. In Section
\ref{sec:typesystem} we outline a simple type system for \bbc in which
channels can be distinguished as being used for broadcast or for
collection. Finally, in Section \ref{sec:expressiveness} we show that
even a version of \bbc that only uses collection is more expressive
than the $\pi$-calculus.

\section{The syntax of \bbc} \label{sec:bbc}


\subsection{The syntactic categories}

A central notion in \bbc is that of \emph{names}. As in the
distributed $\pi$-calculus \cite{DBLP:conf/popl/RielyH98}, processes
reside at named sites, called \emph{locations}, and use named
\emph{channels} for communication.  We assume that names are taken
from a countably infinite set $\textbf{Names}$. In general, we denote
names of channels by $a,b,c \ldots$, names of locations by
$l,m,n \ldots$ and if nothing is assumed about the usage of the name
we denote them by $x,y,z \ldots$.
 
We let $M \in \textbf{Msg}$ range over the set of messages, let $P$ range over the set
of processes and let $N$ range over the set of networks. Since a
collecting input (defined below) can receive a multiset of messages,
each coming from a distinct sender, we also consider \emph{multiset
  expressions} $E$ and multiset variables $S$ that can be instantiated
to multiset expressions. The formation rules defining the syntactic
categories of \bbc are given below. 

\begin{table}
       \begin{align*}
              M  & ::= x \mid S \mid (M_1,\ldots, M_k) \mid g(E) \mid f(M) \\[2mm]
              P  & ::= \binp{a}{\patt{\vec{x}M}}P_1 \mid
              \binp{a}{\cpatt{\vec{x}M.S}} P_1 \mid \outp{a}{M}P_1   \mid \new{x:\beta_x}P_1 \mid[M_1=M_2]P_1 \mid [M_1 \neq M_2]P_1 \mid \left(P_1 \para P_2\right)  \\[2mm]
              & \mid \left(P_1 + P_2\right) \mid \Nil \mid A \left(\vec{M}\right) \\[2mm]
              E & ::= \{ M_1, \ldots, M_k \} \mid S \\[2mm]
              N &  ::= \locate{P} \mid \left(N_1 \para N_2\right) \para \new{x:b}N \para l \near m \\
       \end{align*}
       \caption{Formation rules for \bbc}
       \label{tab:FormationRules}
\end{table}

\subsection{Messages and patterns}

For ordinary expressions we assume a collection of term constructors
ranged over by $f$, that build messages out of other
messages. Moreover, we assume the existence of a collection of
multiset selectors ranged over by $g$; these can be used to build
messages out of multisets.

If a channel is to be chosen among a collection of candidate channels, we can use a multiset selector to describe this.
\begin{example}
  An example of a multiset selector is the function
  $\text{find-a}$ that intuitively returns the name $a$ if this name
  occurs as the first component in a multiset of pairs of names and
  the name $k \neq a$ and is defined by $\mbox{find-a}(S) = a \;
  \text{if}\; (a,x) \in S \;\text{for some } x$ and $k$ otherwise.
\end{example}

Practical examples of interest are the election of a common channel in
a ad-hoc network; selection of a channel for cooperative sensing or for communication within an ad-hoc cluster. 

An important notion is that of an \emph{input pattern} which is of the form $\patt{\vec{x}M}$, where the variable names in $\vec{x}$ are distinct and occur free in $M$.
A message $O$ matches this pattern, if it can be obtained from it through substitution.

More formally, a \emph{term substitution} is a finite function
$\theta: \textbf{Names} \ra \textbf{Msg}$.  The substitution can also
be written as a list of bindings $\theta = [x_1 \mapsto M_1,\ldots,
x_k \mapsto M_k]$. The action of $\theta$ on an arbitrary message or
multiset expression is defined in the expected way.  $M'$ is said to
match $\patt{\vec{x}}{M}$ with $\theta$ if for a substitution $\theta$
with $\dom{\theta} = \vec{x}$ $M' = M\theta$ is true.

\subsection{Processes}

In a collecting communication setting, the receiver can make no assumption
about the number of messages that will be received, nor on the order
in which they are received. Moreover, we cannot assume that a message
that has arrived will only occur once among the messages received
during a single collecting communication. We shall therefore think of
a collecting input as receiving a multiset of messages.

There are two kinds of input prefixes in \bbc:
\begin{itemize}
\item The \emph{broadcast input} $\binp{a}{\patt{\vec{x}M}}P_1$ in
  which a single term matching the pattern $\patt{\vec{x}M}$ is
  received on the channel $a$. The pattern variables in $\vec{x}$ are
  bound in $P_1$ and get instantiated with the appropriate subterms
  that correspond to the pattern.
\item The \emph{collection input} $\binp{a}{\cpatt{\vec{x}M.S}} P_1$
  in which a non-empty multiset of terms $\{ M_1, \ldots, M_K \}$ each of which matches the pattern
  $\cpatt{\vec{x}}{M}$ is received on the channel $a$. Note that in this
    case the scope of the pattern variables in $\vec{x}$ \emph{does
      not extend} to $P_1$. Following the input, the multiset variable $S$
    is instantiated to the multiset $\{ M_1, \ldots, M_K \}$.
\end{itemize}

In a \emph{restriction} $\new{x:\beta_x}P_1$ the notion of bounded
communication is made explicit: we declare the name $x$ to be private
within $P_1$ and to have bound $\beta_x$, where $\beta_x :
\textbf{Names} \ra \Nat$ is a function such that for any location name
$m$ we have that $\beta_x(m) = k$, if it is the case that for a
process located at $m$ there are at most $k$ senders that are able to
send a message to it using the channel $x$.

The remaining process constructs are standard.
 The \emph{output}
process $\outp{a}{M}P_1$ sends out the message $M$ on the channel
named $a$ and then continues as $P_1$.  \emph{Match}
$[M_1=M_2]P_1$ and \emph{mismatch} $[M_1 \neq M_2]P_1$ proceed as
$P_1$ if $M_1$ and $M_2$ are equal, respectively
distinct. \emph{Parallel composition} $P_1 \para P_2$ runs the
components $P_1$ and $P_2$ in parallel. \emph{Nondeterministic choice}
$P_1 + P_2$ can proceed as either $P_1$ or $P_2$; \emph{inaction}
$\Nil$, has no behaviour.

Finally, we allow agent identifiers $A(\vec{M})$ parameterized by a sequence of
messages; an identifier must be defined using an equation of the
form $A(\vec{x}) \defeq P$. The only names free in $P$ must be the
parameters found in $\vec{x}$, that is $\fn{P} \subseteq \vec{x}$.
Definitions of this form can be recursive,
with occurrences of $A(\vec{x})$ (with names in $\vec{x}$
instantiated by concrete messages) occurring within $P$.

Restriction and broadcast input are name binders; for a process $P$, the sets of \emph{free
  names} $\fn{P}$ and \emph{bound names} $\bn{P}$ of $P$ are defined
as expected. For collection input we define
\[ \bn{\binp{a}{\cpatt{\vec{x}{M\, S}}}P_1} =  \bn{P_1}  \]

Replication, denoted as $\bang P$, is a derived construct in \bbc; a
replicated process $\bang P$ is expressed by the agent identifier
$A_P$ whose defining equation is $A_P = P \para A_P$ and should
therefore be thought of as an unbounded supply of parallel copies of
$P$.

\subsection{Networks}

As in the 
distributed
$\pi$-calculus~\cite{DBLP:conf/popl/RielyH98} a parallel composition
of located processes is called a \emph{network}.  According to
the formation rules, a \emph{network} $N$ can be a process
$P$ running at location $l$, which is denoted as $\locate{P}$. We also
allow parallel composition $P_1 \para P_2$ and restriction $\new{x}N$
at network level. Moreover, the \emph{neighbourhood predicate} $l
\near k$ denotes that location $l$ is close to $k$. For the
neighbourhood predicate, parallel composition is thought of as logical
conjunction. So, if $l$ and $k$ are close to each other, then
$l\bowtie m$ can be written instead of $l \near m | m \near l$.

          For any term $M$, the set of free names $\fn{M}$ is defined
          in the standard way. 

The usual notions of $\alpha$-conversion also apply here. We write
$P_1\equiv_{\alpha} P_2$ if $P_1$ can be obtained from $P_2$ by
renaming bound names in $P_2$, and likewise we write
$N_1\equiv_{\alpha} N_2$ if $N_1$ can be obtained from $N_2$ by
renaming bound names in $N_2$.

\section{The semantics of \bbc} \label{sec:semantics}

We now describe a reduction semantics of \bbc.

\subsection{Evaluation of message terms}

In our semantics, we rely on an evaluation relation $\leadsto$ defined
for both message terms and multiset expressions. 

\begin{example} Assume that our set of function symbols contains the
  projection function which extracts the first coordinate of a pair of
  names and that this is the only function symbol that can lead to
  evaluation. The evaluation relation can then be defined by the axiom
\[ \text{first}(x,y) \leadsto x \]
and the rules
\begin{align*}
       \frac{M \leadsto M'}{\mset{M,\ldots} \leadsto \mset{M',\ldots}} \quad
       \frac{M \leadsto M'}{f(M) \leadsto f(M')} \quad
       \frac{E \leadsto E'}{g(E) \leadsto g(E')}
\end{align*}
\end{example}

A message term $M$ is \emph{normal} if $M \not\leadsto M_1$ for any
$M_1$.

\subsection{Structural congruence and normal form}
\label{sec:StructuralCongruenceAndNormalForm}

Structural congruence $\equiv$ is defined for both processes and networks; two processes (or networks) are related, if they are identical up to simple structural modifications such as the ordering of parallel components.
The relation is defined as the least equivalence relation satisfying the proof rules and axioms of Tables~\ref{table:scong-proc} and \ref{table:scong-net}.
\begin{table}[t]
                     \begin{tabular}{clcl}
                            \runa{P-Out}                    &
                                $\infrule{M \leadsto M'}{\outp{a}{M}P
                                  \equiv \outp{a}{M'}P}$ & 
                            \runa{P-Com}                & $P_1 \para P_2 \equiv P_2 \para P_1$  \\[1mm]
                            \runa{P-As}                 & $(P_1 \para P_2) \para P_3 \equiv P_1 \para (P_2 \para P_3)$ &
                            \runa{P-Com-Plus} & $P_1 + P_2 \equiv P_2 + P_1$ \\[1mm]
                            \runa{P-As-Plus}     & $(P_1 + P_2) + P_3 \equiv P_1 + (P_2 + P_3)$ &
                            \runa{P-Nil}                & $P \para \Nil \equiv P$ \\[1mm]
                            \runa{P-Ext}                & $\new{x}(P_1 \para P_2) \equiv \new{x}P_1 \para P_2$ if $x \not\in \mathrm{fn}(P_2)$ &
                            \runa{P-New}                & $\new{x}\new{y} P \equiv \new{y}\new{x} P$ \\[1mm]
                            \runa{P-Eq-1}               & $[M = M]P \equiv P$ &
                            \runa{P-Eq-2}               & $[M \neq N]P    \equiv P$ \\[1mm]
                            \runa{P-Eq-3}               & $\infrule{M
                                  \leadsto M'}{[M \neq N]P
                                \equiv [M' = N]P}$ &
                            \runa{P-Eq-4}               & $\infrule{M
                                  \leadsto M'}{[M = N]P
                                \equiv [M' = N]P}$ \\[3mm]
                                \runa{P-Eq-5}                  &  $[M
                                \neq N] P \equiv [N \neq M]P$&
                             \runa{P-Eq-6}                  &  $[M
                                = N] P \equiv [N = M]P$ \\[3mm]
                            \runa{P-Alpha}              &  $\infrule{P_1 \equiv_{\alpha} P_2}{P_1\equiv P_2}$  &
                            \runa{P-AG}                 & $ \infrule{A(\vec{x}) \defeq P}{A(\vec{M}) \equiv P\theta}$ \\
                                \runa{P-Par}    &   $\infrule{P_1
                                  \equiv P_2}{P_1 \para Q \equiv P_2
                                  \para Q}$ &
                                \runa{P-Res}    &   $\infrule{P_1
                                  \equiv P_2}{\new{x} P_1 \equiv
                                  \new{x} P_2 }$ \\[1mm]
                     \end{tabular}
       \caption{Structural congruence for processes}
       \label{table:scong-proc}
\end{table}
In the rules defining structural congruence, the parallel composition is commutative and associative both at the level of processes (rules \runa{P-Com} and \runa{P-As}) and of networks (rules \runa{N-Com} and \runa{N-As}).
This justifies the use of iterated parallel composition $\prod_{i \in I} P_i$ introduced earlier.

The restriction axioms describe the scope rules of name restrictions.
The scope extension axioms \runa{P-Ext} and \runa{N-Ext} express that the scope of an extension can be safely extended to cover another parallel component if the restricted name does not appear free in this component.
The exchange axioms \runa{P-New} and \runa{N-New} allow us to exchange the order of restrictions. 

The agent axiom \runa{P-Ag} tells us that an instantiated agent
identifier $A(\vec{n})$ should be seen as the same as the right-hand
side of its defining equation $A(\vec{x}) = P$ instantiated by the
substitution $\theta$ that maps the names in $\vec{x}$ component-wise
to those of $\vec{n}$. 

Finally, the rules \runa{P-Par} and \runa{P-Res} express that
structural congruence is indeed a congruence relation for parallel
composition and restriction.
\begin{table}[t]
              \begin{stretching}
                     \begin{tabular}{c lcl}
                            \runa{N-Cong} & \infrule{P_1 \equiv P_2}{\locate{P_1} \equiv \locate{P_2}} &
                            \runa{N-Com}  & $N_1 \para N_2 \equiv N_2 \para N_1$ \\
                            \runa{N-As}   & $(N_1 \para N_2) \para N_3 \equiv N_1 \para (N_2 \para N_3)$ &
                            \runa{N-Nil}  & $N \para \Nil \equiv N$ \\
                            \runa{N-Ext}  & $\new{x}(N_1 \para N_2) \equiv \new{x}N_1 \para N_2$  if $x \not\in \mathrm{fn}(P_2)$ &
                            \runa{N-New}  & $\new{x}\new{y} N \equiv \new{y}\new{x} N$ \\
                            \runa{N-Loc}  & $\locate{P \para Q} \equiv \locate{P} \para \locate{Q}$ &
                            \runa{N-Eq}   & $\new{n}\locate{P}  \equiv \locate{\new{n}P} \quad \text{if } l \neq n$ \\[3mm]
                            \runa{N-Alpha} &  $\infrule{N_1
                                  \equiv_{\alpha} N_2}{P_1\equiv N_2}$
                                &
                               \runa{N-Par}    &   $\infrule{N_1
                                  \equiv N_2}{P_1 \para N_3 \equiv N_2
                                  \equiv N_3}$ \\[1mm]
                                \runa{N-Res}    &   $\infrule{N_1
                                  \equiv N_2}{\new{x} N_1 \equiv
                                  \new{x} N_2 }$ \\[1mm]

                     \end{tabular}
              \end{stretching}
       \caption{Structural congruence for networks}
       \label{table:scong-net}
\end{table}

By using the laws of structural congruence, any network can be
rewritten to normal form.  Informally, a network is in normal form if
it consists of the total neighbourhood information as one parallel
component and the location information as the other.  In the
following, the notation $\prod_{i \in I} P_i$ is used to denote
iterated parallel composition, i.e. the parallel composition of $P_1,
\ldots, P_k$, where $I = \{ 1, \ldots, k \}$.

\begin{definition}
A network $N$ is on \emph{normal form} if 
\(  N = \new{\vec{m}} ( \conn \para
\prod_{k \in K} \locate[l_k]{P_k} ) \)
where 
\(
\conn  = \prod_{i \in I} \prod_{j \in I \setminus \set{i}} l_i \near m_j .
\)
\end{definition}
We write $l \near m \in \conn$ if $\conn \equiv l \near m \para
\conn'$ for some $\conn'$.
It is easy do see that any network $N$ can be rewritten into normal
form.
\begin{theorem}\label{thm:nf}
  For any network $N$, there exists a network $N_1$ such that $N_1 \equiv N$ and $N_1$ is on normal form.
\end{theorem}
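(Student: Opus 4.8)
The plan is to proceed by structural induction on the formation of the network $N$ according to the rules in Table~\ref{tab:FormationRules}, showing at each step that the congruence laws of Table~\ref{table:scong-net} suffice to rewrite $N$ into the required shape. The two base cases are immediate. If $N = \locate[l]{P}$ is a single located process, then by \runa{N-Nil} and \runa{N-Com} we have $N \equiv \Nil \para \locate[l]{P}$, which is on normal form with an empty restriction vector, empty connectivity component $\conn = \Nil$ (the empty product, $I = \emptyset$), and a single located component ($K = \{1\}$). If $N = l \near m$ is a neighbourhood predicate, then $N \equiv (l \near m) \para \Nil$ is on normal form with $\conn = l \near m$ and no located components ($K = \emptyset$).

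For the restriction case $N = \new[b]{x}N'$, the induction hypothesis gives $N' \equiv \new{\vec{m}}(\conn' \para \prod_{k \in K}\locate[l_k]{P_k})$. Applying \runa{N-Res} and then treating the two adjacent restrictions as a single restriction over the extended vector (justified by \runa{N-New}), I obtain $N \equiv \new{x\vec{m}}(\conn' \para \prod_{k \in K}\locate[l_k]{P_k})$, which is again on normal form. For the parallel case $N = N' \para N''$ the induction hypothesis yields normal forms $N' \equiv \new{\vec{m}'}(\conn' \para \prod_{k \in K'}\locate[l_k]{P_k})$ and $N'' \equiv \new{\vec{m}''}(\conn'' \para \prod_{k \in K''}\locate[l_k]{P_k})$, where I may assume $K'$ and $K''$ disjoint. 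I would first use $\alpha$-conversion (\runa{N-Alpha}) to rename the bound names so that $\vec{m}'$ and $\vec{m}''$ are disjoint and neither clashes with the free names of the opposite component. Repeated scope extension (\runa{N-Ext}) then floats every restriction to the front, and the laws \runa{N-Com} and \runa{N-As} let me regroup the two connectivity components together and the two indexed products together, giving $N \equiv \new{\vec{m}'\vec{m}''}\bigl((\conn' \para \conn'') \para \prod_{k \in K' \cup K''}\locate[l_k]{P_k}\bigr)$. Since $\conn' \para \conn''$ is again a parallel product of neighbourhood predicates, this is on normal form.

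The hard part will be the side condition on \runa{N-Ext}: a restriction $\new{x}$ may only be floated past a parallel neighbour when $x$ does not occur free there. This is exactly why the preparatory $\alpha$-renaming in the parallel case is essential, and it is the step I expect to require the most care, since one must argue that after renaming the bound vectors are genuinely fresh for the whole surrounding component and not merely for the subterm under consideration. A secondary technical point is the treatment of restrictions sitting inside a located component: these can, if one wishes, be lifted to the network level using \runa{N-Eq} (renaming the bound name away from the location name when necessary), but the normal form as stated does not demand this, so the located processes $P_k$ may be left untouched. With these observations the induction goes through routinely.
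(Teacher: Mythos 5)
Your proposal is correct and follows essentially the same route as the paper's own proof, which is a brief sketch invoking structural induction on $N$ and the use of the scope extension axioms to push restrictions outward, with $\alpha$-conversion of bound names where needed. You have simply filled in the details of that sketch---the base cases, the merging of restriction prefixes, and the capture-avoiding renaming in the parallel case---so there is nothing methodologically different to report.
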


\begin{proof}
Induction in the structure of $N$. The proof is similar to that for
the $\pi$-calculus \cite{SanWalk}; the idea is to use the scope
extension axioms to push out restrictions while $\alpha$-converting bound
names whenever needed.
\end{proof}

\subsection{The reduction relation}
\begin{table*}
              \begin{tabular}{c l}
                     \runa{Broad} & $\begin{array}{ll}
                                          \new{\vec{n}:\vec{\beta}}(\conn \para \locate[l]{\outp{a}{M}P_k} \para \prod_{i=1}^{k} \locate[m_i]{\binp{a}{\patt{\vec{x}}{M'_i}}Q_i} \para N_1 )&\\
                                          \longrightarrow & \\
                                          \new{\vec{n}:\vec{\beta}}(\conn  \para \locate[l]{P_k } \para
                                          \prod_{i=k'+1}^{k}\locate[m_i]{\binp{a}{\patt{\vec{x}}{M'_i}}Q_i} \para \prod_{i=1}^{k'}\locate[m_{\ell}]{Q_{i}\theta_{i}} \para N_1)&                                   
                                          \end{array}$ \\
                                          & where $l \near m_i \in \conn$ for all $1 \leq i \leq m$ and either\\
                                          &$\quad\quad$ (1) $a \not\in \vec{n}$, $k' =|\setof{m_{\ell}}{l_k\near m_{\ell} \in \conn}|\leq b(a,l)$ for all $\ell \in L$, or \\
                                          &$\quad\quad$ (2) $a \in \vec{n}$, $k' \leq |\setof{m_{\ell}}{l_k \near   m_{\ell}  \in \conn}| \leq \beta_a(l)$ for all $\ell \in L$,\\
                                         
                                          & and for all $1 \leq i \leq m'$ we have $M = M'_{i}\theta_{i}$
                                          for some $\theta_{i}$,\\
                     \runa{Local} &     $\begin{array}{ll}
                                                 \new{\vec{n}:\vec{\beta}}(\conn \para\igno \para\locate[l]{\outp{a}{M}P \para \binp{a}{\patt{\vec{x}}{M'}}Q \para P'}
                                                 \para N_1) &\\ 
                                                 \longrightarrow & \\
                                                 \new{\vec{n}:\vec{\beta}}(\conn_1 \para \igno_1 \para  \locate[l_k]{P \para Q\theta \para P' } \para N_1) \\
                                          \end{array}$\\[2mm]
                                    &     where $M = M'\theta$ for some $\theta$ \\[6mm]
 
                     \runa{Coll} &   $\begin{array}{ll}
                                                 \new{\vec{n}:\vec{\beta}}(\conn \para \prod_{k \in K} 
                                                 \locate[l_k]{\outp{a}{M_k}P_k \para P'_k} \para 
                                                 \locate[m]{\binp{a}{\cpatt{\vec{x}M.S}}Q_\ell} \para N_1 ) \\
                                                 \longrightarrow & \\
                                                 \new{\vec{n}:\vec{\beta}}(\conn \para \prod_{k \in K }
                                                 \locate[l_k]{P_k \para P'_k} \para 
                                                 \locate[m]{Q_{\ell}[S\mapsto\mset{M_1,\ldots,M_{|K|}} } \para N_1)
                                          \end{array}$ \\
                                   &      where $l_j \near m \in \conn$ for all $j \in K$ and either
                                          \begin{tabular}{l} 
                                                 $a \not\in \vec{n}$, and $1 \leq |K| \leq b(a,m)$ or \\
                                                 $a \in \vec{n}$ and $1 \leq |K| \leq \beta_a(m)$
                                          \end{tabular}\\
                                   & and for all $j \in K$ we have $M_j = M\theta_{\ell}$ for some $\theta_{\ell}$\\
                     \end{tabular}
       \caption{Reduction rules for communication in \bbc networks on normal form, assuming connectivity $b(a,m)$}
       \label{table:redsem2}
\end{table*}

In our reduction rules, we assume that network terms are on normal
form, as defined above. The rules are given
in Table \ref{table:redsem2}.

We call a location that contains an available input a \emph{receiving
  location} for channel $a$ and a location which contains an availabke
output a \emph{sending location}. For broadcast, we require that if
the number of receiving locations for a channel $a$ exceeds the bound
$b(a,l)$ for $a$ (or $\beta_a(m)$, if $a$ is a bound name), then at
most $b(a,l)$ receiving locations can receive the message. All other
receiving locations do not receive anything and will still be waiting
for an input. Each of the receiving locations must be connected to the
sending location $l$. This is captured by the rule \runa{R-Broadcast}.

For collection, the number of locations that can simultaneously send
messages on a channel $a$ cannot exceeds the bound $b(a,l)$ for $a$
(or $\beta_a(m)$, if $a$ is a bound name). Moreover, each sending
location must be connected to the receiving location $m$. This is
captured by the rule \runa{R-Collect}. Finally, the reduction rule
\runa{R-Local} describes local communication within the confines of a
single location.

\begin{example}
Consider the network
\begin{align*}
       N & = l_1 \near l_3 \para l_2 \para l_3 \para \locate[l_1]{\aoutp{a}{(a,b)}} \\
          &  \para \locate[l_2]{\aoutp{a}{(c,b)}} \para \locate[l_3]{\binp{a}{\cpatt{x}{(x,b).S}}.\aoutp{d}{\text{find-a}(S)} }
\end{align*}

Assume that $b(a) = 2$. This network consists of three locations. Location $l_1$ offers an output on the $a$-channel of the pair $(a,b)$, and location $l_2$ offers
an output on the $a$-channel of the pair $(c,b)$. At location $l_3$ we
have a process that on the channel $a$ will receive a set of messages,
all of which are of the form $(x,b)$ for some $x$, and subsequently
output $a$ if one of the pairs received contained $a$.

We have $ N \ra  l_1 \near l_3 \para l_2 \para l_3 \para
\locate[l_1]{\nil} \para \locate[l_2]{\nil} \para
\locate[l_3]{\aoutp{d}{a}}$.
\end{example}

\section{Bisimilarity in \bbc} \label{sec:bisim}

\subsection{Barbs}

In our treatment of bisimilarity, we define an observability predicate
(aka barbs) We write $P \barb{a,\broad}$ if $P$ admits a broadcast
observation on channel $a$, and $\vdash P \barb{a,\collect}$ if
$P$ admits a collection observation on channel $a$. We can define
a similar predicate for networks. We write $N \barb{a,d}@l$ if $N$
allows a barb $\barb{a,d}$ at location
$l$. \tabref{tab:barbs-networks} contains the most interesting rules.
\begin{table}[h]
  \begin{center}
  \begin{tabular}{llll}
  \runa{Inp-B} & $ \binp{a}{\patt{\vec{x}M}}P_1 \barb{a,\broad}$ & 
  \runa{Inp-C} & $ \binp{a}{\cpatt{\vec{x}M.S}}P_1 \barb{a,\collect}$ \\[4mm]
  \runa{Outp} &  $\outp{a}{M}P \barb{a,d}$ & 
  \runa{Par} & \infrule{ P_1 \barb{a,d}}{
  P_1 \para P_2 \barb{a,d}} \\[4mm]
 \runa{New} & \newcondinfrule{ P_1 \barb{a,d}}{
  \new{x :
    \beta_n}P_1 \barb{a,d}}{x \neq a} & 
    \runa{Locate} & \infrule{ P \barb{a,d}}{
      \locate{P} \barb{a,d}@\ell} \\
\runa{Cong} & \infrule{ N_1 \barb{a,d}@l \quad N_1 \equiv
  N}{ N \barb{a,d}@l} 
  \end{tabular}
\end{center}
  \caption{Selected rules for barbs}
  \label{tab:barbs-networks}
\end{table}

\begin{definition}
A weak barbed bisimulation $R$ is a symmetric binary relation on
networks which satisfies that whenever $N_1 R N_2$ we have
\begin{enumerate}
\item If $N_1 \ra N'_1$ then for some $N'_2$ we have $N_2 \ra^* N'_2$
  where $N'_1 R N'_2$ 
\item For every location $l$, if $N_1  \barb{a,d}@l$ then also $N_2  \barb{a,d}@l$
\end{enumerate}
We write $N_1 \wbisdot N_2$ if $N_1 R N_2$ for some weak barbed
bisimulation $R$.
\end{definition}

\begin{theorem}
Weak barbed bisimilarity is an equivalence.
\end{theorem}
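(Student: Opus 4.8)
The goal is to show that weak barbed bisimilarity $\wbisdot$ is an equivalence relation, i.e., reflexive, symmetric, and transitive. The standard approach for coinductively-defined bisimilarities is not to verify these properties directly against the definition, but rather to exhibit, in each case, a concrete relation that is itself a weak barbed bisimulation and that witnesses the desired pair. Since $\wbisdot$ is the union of all weak barbed bisimulations, it suffices to show that this union again satisfies the two closure conditions, and then to construct appropriate bisimulations for reflexivity and transitivity.

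\begin{proof}[Proof plan]
The plan is to establish the three defining properties of an equivalence in turn, each time by producing a relation that is a weak barbed bisimulation. For \textbf{reflexivity}, I would show that the identity relation $\mathrm{Id} = \setof{(N,N)}{N \text{ a network}}$ is a weak barbed bisimulation: it is trivially symmetric, and both clauses of the definition hold immediately since $N_1 \ra N_1'$ is matched by the same reduction $N_2 = N_1 \ra N_1'$ (using $\ra^* $ with a single step), and every barb of $N$ is a barb of $N$. Hence $N \wbisdot N$ for all $N$. For \textbf{symmetry}, I would observe that the relation $\wbisdot$ is by definition the union of weak barbed bisimulations, each of which is required to be symmetric; the union of symmetric relations is symmetric, so $\wbisdot$ is symmetric directly.

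For \textbf{transitivity}, the key step is to show that the relational composition of two weak barbed bisimulations is again a weak barbed bisimulation. Concretely, suppose $R$ and $S$ are weak barbed bisimulations; I would consider the composition $R \circ S = \setof{(N_1,N_3)}{\exists N_2.\, N_1 R N_2 \logand N_2 S N_3}$. First I would check symmetry: since $R$ and $S$ are symmetric, so is $R\circ S$ (with the roles of $R$ and $S$ swapped). Then I would verify the two bisimulation clauses. For the reduction clause, given $N_1 R N_2 S N_3$ and a step $N_1 \ra N_1'$, the bisimulation property of $R$ yields $N_2 \ra^* N_2'$ with $N_1' R N_2'$; then I would need the bisimulation property of $S$ to be closed under the \emph{weak} reduction $\ra^*$, matching $N_2 \ra^* N_2'$ by some $N_3 \ra^* N_3'$ with $N_2' S N_3'$, giving $N_1' (R\circ S) N_3'$. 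For the barb clause, $N_1 \barb{a,d}@l$ gives $N_2 \barb{a,d}@l$ by $R$, and then $N_3 \barb{a,d}@l$ by $S$. Taking $R = S = \wbisdot$ then shows $\wbisdot \circ \wbisdot \subseteq \wbisdot$, which is exactly transitivity.

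I expect the main obstacle to be the reduction clause of the transitivity argument, specifically the need to match a multi-step weak reduction $N_2 \ra^* N_2'$ using the bisimulation $S$. The definition as stated only guarantees matching of a \emph{single} reduction step; so I would first prove an auxiliary lemma that weak barbed bisimulations are closed under weak reductions, namely that if $N R M$ and $N \ra^* N'$ then there exists $M'$ with $M \ra^* M'$ and $N' R M'$. This follows by induction on the length of the reduction sequence $N \ra^* N'$, applying the single-step clause at each step and using the transitivity of $\ra^*$ to concatenate the matching sequences. With this lemma in hand, the composition argument goes through cleanly, and the three properties together establish that $\wbisdot$ is an equivalence.
\end{proof}
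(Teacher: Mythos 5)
The paper states this theorem without giving any proof at all, so there is no proof of record to compare against; judged on its own merits, your argument is the standard one and is essentially correct. Reflexivity via the identity relation, symmetry via the fact that $\wbisdot$ is a union of symmetric relations, and transitivity via composition of bisimulations are exactly what is expected here, and you correctly isolated the only non-trivial point: the single-step matching clause in the definition must be promoted to an auxiliary lemma saying that a weak barbed bisimulation matches multi-step reductions $\ra^*$, proved by induction on the length of the reduction sequence.

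One imprecision is worth flagging: your claim that $R \circ S$ is symmetric whenever $R$ and $S$ are is false in general. Symmetry of $R$ and $S$ only gives $(R\circ S)^{-1} = S \circ R$, which need not equal $R\circ S$; for instance, with $R = \set{(a,b),(b,a)}$ and $S = \set{(b,c),(c,b)}$ one gets $R\circ S = \set{(a,c)}$, which is not symmetric. This does not break your proof, because you ultimately instantiate with $R = S = \wbisdot$: once you have checked (as you indicate at the outset) that $\wbisdot$ is itself a weak barbed bisimulation --- it is symmetric, being a union of symmetric relations, and satisfies both clauses since every pair lies in some bisimulation --- the composition $\wbisdot \circ \wbisdot$ is symmetric precisely because $\wbisdot$ is. Alternatively, for arbitrary bisimulations $R$ and $S$ you can work with $R \circ S \cup S \circ R$, which is symmetric and satisfies both clauses by the same checking you describe. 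With that repair made explicit, the proof is complete.
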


As in other process calculi, the notion of barbed bisimilarity is
unfortunately not a congruence, as it is not preserved under parallel
composition.

Two networks are weak barbed congruent if they are barbed bisimilar
in every parallel context, i.e. no surrounding network can tell them
apart.

\begin{definition}
We write $N_1 \wbis N_2$ if for all networks $N$ we have that 
\( N_1 \para N \wbisdot N_2 \para N \)
\end{definition}

\section{A hierarchical protocol}
\label{sec:case}

In this section we outline how one can reason about a distributed
protocol that involves both collection and broadcast. The protocol
itself is generic but representative of the current trend in
distributed communication systems with a large number of devices and
very few controlling/central entities.  In essence, this protocol is
composed by traffic collection in the upstream
direction, i.e. from the leaves to the central entity, and traffic
broadcast in the downstream direction, i.e. from the central entity to
the leaf nodes.
\begin{figure}
       \centering
              \includegraphics[width=\linewidth]{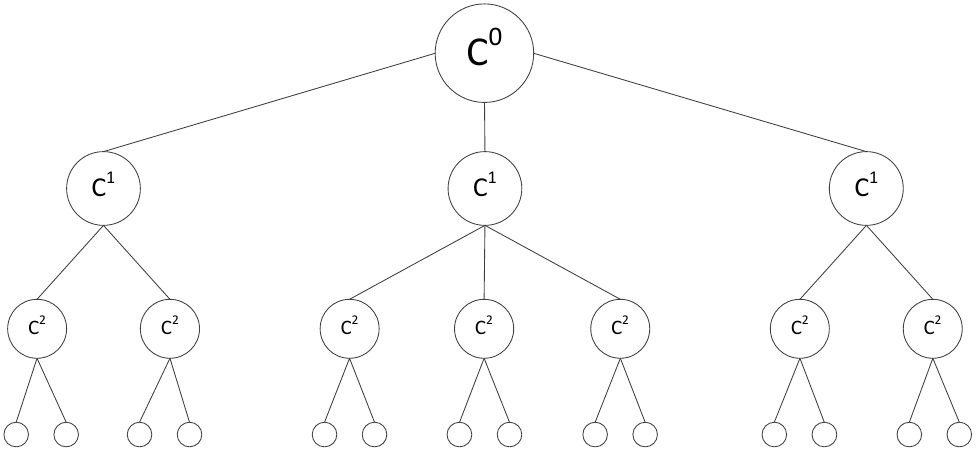}
       \caption{A multi-level network.}
       \label{fig:MultiLevelNetwork}
\end{figure}

Figure~\ref{fig:MultiLevelNetwork} shows the multi-level network
topology assumed for this protocol; the topology is that of a tree.
Each level of the network is connected to the one above through a local
collection and broadcast node, which we simply denote as local central node.
This node is then a leaf node of the above level, e.g. the
nodes denoted as $C^1$ are leaf nodes of the level $l_0$, while being
the local central nodes at each of the locations of $l_1^i$, which for
ease of notation we denote solely as $l_1$. We assume that there can
be at most $\beta$ nodes connected to every central node, meaning that the
bound of every channel should be $\beta$.

The goal of the distributed protocol is to collect information from
all leaf nodes in the network; and then communicate it to a central entity,
denoted as $C^0$.  The central entity $C^0$, then reaches a global
decision, which is then communicated to the leaf nodes. For simplicity, we assume that
this decision only depends on the data collected from the leaves.

\subsection{The $n$-level protocol in \bbc}
\label{sec:TheProtocolInBbc}

In what follows, we define the components of the network and protocol inductively.
We assume the network to have $n$ levels or depth $n$, where level $0$ corresponds to the central entity, while level $n$ denotes the level where all the leaf nodes are.

One of the sub-networks of the $n$-level protocol at level $k$ , as
depicted in Figure \ref{fig:MultiLevelNetwork}, is composed by $m$
nodes and a local central node $C_{k}$.  We denote this sub-network by
the agent identifier
$D^{k,n}\left(a'_k,a''_k,b'_k,b''_k,\locs,\ell_{k-1}\right)$ where
$\locs$ is a set of locations and $\ell_{k-1}$ is the local central
location at $k-1$ level. The other name parameters are
\begin{itemize}
       \item $a'_k,a''_k$ - names of the channels used for
         communications at location $l_k$, i.e. between the leaf nodes
         and the local central entity; the names transmitted are collected by the central node at
         level $k+1$. 
       \item $b'_k,b''_k$ - names of the channels that the local
         central entity uses to communicate to the local central
         entity for which it is a lead, i.e. the channel over which
         the local central entity $C_k$ communicates with the local
         central entity $C_{k-1}$. The names transmitted are broadcast to the child nodes of
         the local central (at level $k-1$). 
       \item $\locs$ is the set of nodes at this level.
\end{itemize}

At the leaf level, $k=0$ we have,
\begin{align*}
       D^{0,n}(a'_0,a''_0,b'_0,b''_0\locs,\ell_{0}) & \defeq \prod_{\ell_i \in \locs}
        \locate[\ell_i]{P_i(a'_0,a''_0)} \para \prod_{\ell_i \in \locs} \ell_i \near \ell_{1} \para \locate[\ell_{0}]{C^0(a'_0,a''_0,b'_0,b''_0)}
\end{align*}
The subprocess $P_i$ where $\ell_i \in \locs$ is defined by
\begin{equation*}
  \label{eq:1}
P_i(a'_0,a''_0) \defeq \new{\omega_i} \outp{a'_0}{\omega_i}\inp{a''_0}{z} (Q_i
(z)\para  P_i(a'_0,a''_0)  )
\end{equation*}
where $Q_i(z)$ depicts the computation that occurs at the leaf
$\ell_i$; we shall not describe this here and $\omega_i$ is the local
name that is the contribution made by the process.

The local central node is defined by
\begin{align*}
  \label{eq:2}
  C^{0,n}(a',a'',b',b'') \defeq \inp{a'}{S}\outp{b'}{f(S)}\inp{b''}{z}\outp{a''}{z}
   C^{0,n}(a',a'',b',b'')
\end{align*}
This process will receive names that are collected to the set $S$ and
then pass the processed information $f(S)$ upwards on the $b$ channel.
After this, the local central node waits for a response to the
information sent and passes on the received name to its children.

The \emph{selection function} $f: \pow{\Names} \ra \Names$ processes
the messages received from the leaf nodes and selects a name; this
could be e.g. a channel estimate. We require that selection is
\emph{idempotent} in the sense that for any family of multisets $S_1,
\ldots, S_k$ we have
\[  f(\set{f(S_1),\ldots,f(S_k)}) = f(\cup^{k}_{i=1} S_i) \]
If we think of names as natural numbers, the function
$f(S) = \min_{x \in S} x$ is an example of an idempotent selection function.

For intermediate levels, $0 < k < n$, $D^k\left(a_k,b_k,\ell_k,\ell_{k-1}\right)$ is defined as follows,
\begin{align*}
  \label{eq:3}
   D^{k+1,n}(a'_{k+1},a''_{k+1},b'_{k+1},b''_{k+1},\locs, \ell_{k+1}) \defeq 
  \new{\vec{\ell}, a'_{k} : \beta,a''_{k} : \beta,b'_{k}
    : \beta,b''_{k} : \beta} \\ \left(\prod_{\locs_i \in \mathsf{Locs}(\locs,k)} D^{k,n}(a'_{k},a''_k,b'_{k},b''_k,\locs_i,\ell_{i})
\para \right. \left. \prod^{m}_{i=1} \ell_i \near \ell_k \para \locate[\ell_{k+1}]{C^{k+1}(a'_k,a''_k,b'_k,b''_k)} \right)
\end{align*}
where the set of locations $\mathsf{Locs}(\locs,k)$ can be partitioned
as $\mathsf{Locs}(\locs,k) = \bigcup_i{i \in I}\locs_i$ where $I$ is
some index set and $\locs_{i} \cap \locs_{j} = \emptyset$ for all $i
\neq j, i,j \in I$.  $m'_i$ is the number of neighbours in the $i$'th
sub-network found at level $k+1$ and $\ell_i \in \vec{\ell}$.

The definition of local central nodes is (for $k < n$)
\begin{align*}
  C^{k,n}((a_k,b_k) \defeq 
  \inp{a_k}{S}\outp{b_k}{f(S)}\inp{b_k}{z}\outp{a_k}{z} (Q_c
  (z) \! \para\!  C^k(a_k,b_k))
\end{align*}

Finally, at level $n$ the information collected from all the leaf nodes reaches the central entity, where the final processing of the received information is performed.
The description of the entire network $D^n$ is $D^{n,n}_m(a_0,\ell_c)$, defined as
\begin{align*}
 D^{n,n}(a'_n,a''_n,b'_n,b''_n,\ell_n) \defeq \new{\vec{\ell},
   a'_{n-1}: \beta,a''_{n-1}: \beta,b'_{n-1}: \beta,b''_{n-1}: \beta} \\
   \left(\prod_{\locs_i \in \mathsf{Locs}(\locs,n)} D^{1,n}_{m'_i}(a'_{n-1},a''_{n-1},b'_{n-1},b''_{n-1},\locs_i,\ell_{n}) \right.\para 
& \left. \prod^{m}_{i=1} \ell_i \near \ell_0 \para \locate[\ell_0]{C^n(a'_n,a''_n,b'_n,b''_n)} \right)
\end{align*}
Here the central entity will collect the estimates from its children
and then broadcast back the value of the selection.
\begin{equation}
  \label{eq:5}
  C^{n,n}(a'_n,a''_n,b'_n,b''_n) \defeq \inp{a'_n}{S}\outp{b'_n}{f(S)} C^{n,n}(a'_n,a''_n,b'_n,b''_n))
\end{equation}

\subsection{Correctness}
\label{sec:Correctness}

We would like the multi-level network
that serves $K$ leaf nodes to be equivalent to a single-level network
containing the same $K$ leaf nodes but now connected to a single
central entity.

To this end we define the flattened version of a hierarchical network $N$,
denoted $\flatten{N,\ell}$. This is precisely a network in which all nodes
are connected to the same central node at location $\ell$.

This operation is defined inductively as follows.
\begin{align*}
 \flatten{D^{0,n}(a'_0,a''_0,b'_0,b''_0,\locs,\ell_0),\ell} \defeq 
 D^{0,n}(a'_0,a''_0,b'_0,b''_0,\locs,\ell_0)  \\[2mm]
\flatten{D^{k+1,n}(a'_{k+1},a''_{k+1},b'_{k+1},b''_{k+1},\locs,\ell_{k+1}),\ell}
\defeq  \\ \new{\vec{\ell},
    a'_{k},a''_{k},b'_k,b''_k} 
\left( \prod^{m}_{\locs_i \in \mathsf{Locs}(\locs,k)}
    \flatten{D^{k,n}(a'_{k},a''_{k},b'_{k},b''_{k},\locs_i,\ell_{i})} \para
    \prod^{m}_{i=1} \ell_i \near \ell \right) \text{where}\; 0 < k < n
\end{align*}

\begin{theorem}
For every $n \geq 0$, for every set of distinct names
$\set{a'_0,a''_0,b'_0,b''_0}$ and set of locations $\locs$ we have
\[ D^{0,n}(a'_0,a''_0.b'_0,b''_0,\locs,\ell) \wbisdot
\flatten{D^{0,n}(a'_0,a''_0,b'_0,b''_0,\locs,\ell),\ell} \]
\end{theorem}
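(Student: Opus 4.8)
The plan is to exploit the defining equation of the flattening operation and thereby reduce the claim to reflexivity of $\wbisdot$. The recursion defining $\flatten{\cdot}$ on the level index bottoms out precisely at $k=0$, and its base clause reads $\flatten{D^{0,n}(a'_0,a''_0,b'_0,b''_0,\locs,\ell_0),\ell} \defeq D^{0,n}(a'_0,a''_0,b'_0,b''_0,\locs,\ell_0)$. The first step is to record the key feature of this clause: it discards the flattening target $\ell$ and returns its argument network unchanged. Intuitively this is correct because a leaf-level block $D^{0,n}$ contains no intermediate local central node whose hierarchical links would have to be redirected to the top, so at this level flattening has nothing to collapse.

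Next I would instantiate the base clause at exactly the parameters occurring in the statement. There the location argument of $D^{0,n}$ is taken to be $\ell$ itself, i.e. $\ell_0 := \ell$, so the base clause fires verbatim and yields $\flatten{D^{0,n}(a'_0,a''_0,b'_0,b''_0,\locs,\ell),\ell} = D^{0,n}(a'_0,a''_0,b'_0,b''_0,\locs,\ell)$. Consequently the right-hand side of the asserted bisimilarity is the very same network term as the left-hand side.

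I would then close the argument by reflexivity. Since weak barbed bisimilarity has already been shown to be an equivalence, it is in particular reflexive, so $N \wbisdot N$ for every network $N$; taking $N := D^{0,n}(a'_0,a''_0,b'_0,b''_0,\locs,\ell)$ discharges the claim uniformly for all $n \geq 0$. The only thing that genuinely requires attention --- and hence the main, albeit minor, obstacle --- is verifying that the base clause applies on the nose: that the displayed location parameter indeed coincides with the flattening target, and that no $\alpha$-conversion of the bound names is needed before the clause can be used. Once that matching is confirmed the two sides are identical and nothing further is required; the substantive work of collapsing a genuine hierarchy lives entirely in the inductive clauses for $k>0$ and is not exercised by this level-$0$ statement.
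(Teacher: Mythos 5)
Your reduction to reflexivity is sound as far as the printed statement goes: the base clause of the flattening operation, $\flatten{D^{0,n}(a'_0,a''_0,b'_0,b''_0,\locs,\ell_0),\ell} \defeq D^{0,n}(a'_0,a''_0,b'_0,b''_0,\locs,\ell_0)$, really is the identity and discards the target location, and since weak barbed bisimilarity has been shown to be an equivalence it is in particular reflexive (the identity relation is itself a weak barbed bisimulation). For what it is worth, the paper offers no proof of this theorem at all, so there is no official argument to measure yours against.

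The genuine problem is that the statement you have discharged is, as you yourself notice, vacuous --- and it is almost certainly a misprint. The declared goal of the correctness section is that the \emph{multi-level} network serving $K$ leaves be equivalent to the \emph{flat} single-level network with the same leaves; that content is carried by the networks $D^{k,n}$ with $k>0$ (ultimately $D^{n,n}$), where flattening genuinely rewires the system: it strips out the intermediate central nodes $C^{k}$ and connects every sub-network directly to $\ell$. Proving the intended claim requires an induction on the depth $k$, a bisimulation relating ``collect at $C^{k}$, forward $f(S)$ upwards, broadcast the answer back down'' to one-shot collection and broadcast at a single centre, and, crucially, the idempotence hypothesis $f(\set{f(S_1),\ldots,f(S_k)}) = f(\cup^{k}_{i=1}S_i)$ --- a hypothesis the paper introduces for no other visible purpose, which is a strong sign the theorem is meant to exercise it. None of this machinery appears in, or follows from, your argument. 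So while your proof is formally correct for the theorem as printed, it establishes none of the correctness property the section is after; a safe answer would flag the statement as a likely misprint and prove the inductive generalization over all levels $k \leq n$ (with your level-$0$ observation as the trivial base case) rather than stopping at the degenerate instance.
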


\section{A type system} \label{sec:typesystem}

We can formulate a type discipline that distinguishes between broadcast and
collection and assigns bounds to these.

Types are given by the formation rules
\[ T ::= \btype{T}{C} \mid \btype{T}{B} \mid T^k \mid T_1 \times T_2
\cdots \times T_k \mid T_1 \ra T_2 \mid \textsf{Loc} \]
We now explain the intent behind these formation rules.

\begin{itemize}
\item The type $\btype{T}{C}$ is the type of a channel than can collect
messages of type $T$, and the type $\btype{T}{B}$ is the type of a channel
that can broadcast a message of type $T$.

\item A multiset with $k$ elements of type $T$ will have type $T^k$. 

\item A tuple with $k$ components of types $T_1, \ldots, T_k$, respectively,
will have type $T_1 \times \cdots \times T_k$ 

\item A constructor $f$ will have type $T_1 \ra T_2$, and a multiset $g$ will have
type $T^k \ra T$. If $g$ has type $T^k \ra T$, $g$ produces an element
of type $T$ from a multiset of $k$ terms, each having type $T$. 

\item Locations have type $\textsf{Loc}$.
\end{itemize}

A type environment $\Tenv$ is a finite function $\Tenv : \textbf{Names}
\cup \textbf{Sels} \cup \textbf{Cons} \rightarrow \textbf{Types}$ that
assigns types to all names, term constructors and multiset selectors. 

Type judgments are relative to a type environment and a set of agent variables $\Theta$ and
are of the form
\begin{align*}
  \Tenv \vdash M : T & \; \text{for terms} \\
  \Tenv \vdash^{\Theta} P & \; \text{for processes} \\
  \Tenv \vdash^{\Theta} N & \; \text{for networks}
\end{align*}

When the set is the same in all the judgements of a type rule, we omit
it throughout the rule.

\begin{table}[h]
\begin{center}
  \begin{tabular}{llll}
    \runa{Var} & $\Tenv \vdash  x : T$ if $\Tenv(x) = T$ &
\runa{SetVar} & $\Tenv \vdash S : T^k$ if $\Tenv(S) = T^k$ \\
\runa{Multiset} & \infrule{\Tenv \vdash M_i : T \quad 1 \leq i \leq k}{\Tenv \vdash
  \set{M_1, \ldots, M_k} : T^k} &
\runa{Select} & \infrule{E : T^k \quad \Tenv(g) = T^k \ra T}{\Tenv
  \vdash g(E) : T} \\
\runa{Cons} & \infrule{\Tenv \vdash M : T_1 \quad \Tenv(f) = T_1 \ra T_2}{\Tenv \vdash
  f(M) : T_2}
  \end{tabular}
\end{center}
  \caption{Type rules for message terms}
\end{table}

\begin{table}[h]
  \centering
  \begin{tabular}[c]{lp{5cm}ll}
    \runa{B-Inp} & \infrule{\Tenv, \vec{x} : \vec{T} \vdash M : T_1
      \quad \Tenv(a) = \btype{T_1}{B} \quad \Tenv, \vec{x} : \vec{T}
      \vdash P_1}{\Tenv \vdash \binp{a}{\patt{\vec{x}M}}P_1} \\
   \runa{C-Inp} & \infrule{\Tenv, \vec{x} : \vec{T} \vdash M : T_1
      \quad \Tenv(a) = \btype{T_1}{C} \quad \Tenv, \vec{x} : \vec{T}
      \vdash P_1}{\Tenv \vdash \binp{a}{\cpatt{\vec{x}M.S}}P_1} \\
\runa{Outp} & \infrule{\Tenv(a) = \btype{T}{d} \quad \mathsf{d} \in
  \set{\mathsf{B},\mathsf{C}} \quad \Tenv \vdash M : T_1}{\Tenv \vdash
  \outp{a}{M}P_1} \\
\runa{Nil} & $\Tenv \vdash \Nil$ &
\runa{Par} & \infrule{\Tenv \vdash P_1 \quad \Tenv \vdash P_2}{\Tenv \vdash P_1 \para
  P_2} \\
\runa{Sum} & \infrule{\Tenv \vdash P_1 \quad \Tenv \vdash P_2}{\Tenv \vdash P_1 +
  P_2} &
\runa{Match} & \infrule{\Tenv \vdash M_1 : T \quad \Tenv \vdash M_2 :
  T \quad \Tenv \vdash P_1}{\Tenv \vdash [M_1 = M_2]P_1} \\
\runa{Mismatch} & \infrule{\Tenv \vdash M_1 : T \quad \Tenv \vdash M_2 :
  T \quad \Tenv \vdash P_1}{\Tenv \vdash [M_1 \neq M_2]P_1} &
\runa{New} & \infrule{\Tenv, x : T \vdash P_1}{\Tenv \vdash \new{x :
    \beta_n}P_1} \\
\runa{Agent} & \infrule{\Tenv \vdash M_i : T_i \quad 1 \leq i \leq |\vec{M}|
  \quad A \in \Theta}{\Tenv \vdash^{\Theta} A(\vec{M})} \\
\runa{Call} & \infrule{\Tenv, \vec{x} : \vec{T} \vdash^{\Theta \cup
    \set{A}} \vdash P \quad \Tenv \vdash M_i : T_i \quad 1 \leq i \leq
  |\vec{M}| \quad A(\vec{x}) \defeq P}{\Tenv \vdash^{\Theta} A(M)}
  \end{tabular}
  \caption{Type rules for processes}
  \label{tab:types-processes}
\end{table}

\begin{table}[h]
  \centering
  \begin{tabular}[c]{ll}
    \runa{Par-N} & \infrule{\Tenv \vdash N_1 \quad \Tenv \vdash N_2}{\Tenv \vdash N_1 \para
  N_2} \\[3mm]
\runa{New} & \infrule{\Tenv, x : T \vdash N_1}{\Tenv \vdash \new{x :
    \beta_n}N_1} \\
\runa{Loc} & \infrule{\Tenv \vdash P \quad \Tenv(\ell) =
  \textsf{Loc}}{\Tenv \vdash \locate{P}} \\
\runa{Near} & \infrule{\Tenv(l) = \textsf{Loc} \quad \Tenv(m) =
  \textsf{Loc}}{\Tenv \vdash l \near m} \\
\runa{Not-Near} & \infrule{\Tenv(l) = \textsf{Loc} \quad \Tenv(m) =
  \textsf{Loc}}{\Tenv \vdash l \off m} \\
  \end{tabular}
  \caption{Type rules for networks}
  \label{tab:types-networks}
\end{table}

The following result, which guarantees that a well-typed network stays
well-typed under reductions, is easily established.

\begin{theorem}
If $\Tenv \vdash N$ and $N \ra N'$ then $\Tenv \vdash N'$
\end{theorem}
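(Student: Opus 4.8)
The statement is a standard subject reduction (type preservation) result, and the plan is to prove it by case analysis on the reduction rule witnessing $N \ra N'$, after first establishing the auxiliary machinery that every such proof requires. The three ingredients I would prove first are: (i) a \emph{substitution lemma} stating that typing is stable under the substitutions generated by communication, both for term variables and for the collected multiset variable $S$; (ii) that typing is preserved by message evaluation, i.e.\ if $\Tenv \vdash M : T$ and $M \leadsto M'$ then $\Tenv \vdash M' : T$; and (iii) that typing is preserved by structural congruence, so that $\Tenv \vdash N$ and $N \equiv N_1$ imply $\Tenv \vdash N_1$. Ingredient (iii) is needed because the reduction rules of Table~\ref{table:redsem2} are stated for networks on normal form: a reduction of an arbitrary $N$ proceeds by rewriting $N$ to normal form (Theorem~\ref{thm:nf}), applying a Table rule, and rewriting back, and each of these steps must preserve typing. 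Ingredients (ii) and (iii) are routine inductions on the respective derivations, using the congruence rules \runa{Par}, \runa{New}, \runa{Loc} and checking each structural axiom in turn.

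The substitution lemma (i) is the heart of the argument. In the form needed it reads: if $\Tenv, \vec{x}:\vec{T} \vdash^{\Theta} P$ and $\Tenv \vdash \theta(x_j) : T_j$ for every $x_j \in \vec{x}$, then $\Tenv \vdash^{\Theta} P\theta$; and analogously, if $P$ is well typed with $S$ assigned the multiset type $T^k$ and $V$ is a multiset value of type $T^k$, then $\Tenv \vdash^{\Theta} P[S \mapsto V]$. Both are proved by induction on the typing derivation of $P$, the only nontrivial base cases being the term rules \runa{Var}, \runa{Multiset} and \runa{Cons}, which commute with substitution because the types recorded in $\Tenv$ for constructors and selectors are fixed.

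With these lemmas in hand I would carry out the case analysis. For \runa{Local} and \runa{Broad} I invert the typing of the redex: the output is typed by \runa{Outp}, giving $\Tenv(a) = \btype{T_1}{B}$, $\Tenv \vdash M : T_1$ and $\Tenv \vdash P$, while each receiving broadcast input is typed by \runa{B-Inp}, giving $\Tenv, \vec{x}:\vec{T} \vdash Q_i$ and $\Tenv, \vec{x}:\vec{T} \vdash M'_i : T_1$. The residual continuation $P$, the untouched inputs and the connectivity component are then immediately well typed, and each residual $Q_i\theta_i$ is handled by the substitution lemma, \emph{provided} the matching substitution $\theta_i$ (characterised by $M = M'_i\theta_i$) assigns to each pattern variable $x_j$ a term of type $T_j$. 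For \runa{Coll} the outputs are typed by \runa{Outp} with $\Tenv(a) = \btype{T_1}{C}$ and each $\Tenv \vdash M_k : T_1$, so by \runa{Multiset} the collected value $\mset{M_1,\ldots,M_{|K|}}$ has type $T_1^{|K|}$; the multiset form of the substitution lemma then types the residual $Q_\ell[S \mapsto \mset{M_1,\ldots,M_{|K|}}]$. Throughout, the enclosing restriction $\new{\vec{n}:\vec{\beta}}$ is peeled off with \runa{New} and restored afterwards.

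I expect the main obstacle to be precisely the well-typedness of the matching substitution flagged above: showing that $M = M'_i\theta_i$ together with $\Tenv \vdash M : T_1$ and $\Tenv, \vec{x}:\vec{T} \vdash M'_i : T_1$ forces $\Tenv \vdash \theta_i(x_j) : T_j$. This is a \emph{pattern decomposition} lemma, proved by induction on the structure of the pattern $M'_i$: a variable position pins down the type directly, a tuple is decomposed componentwise, and a constructor $f(M'')$ is inverted through the fixed typing $\Tenv(f) = T'_1 \ra T_2$ of \runa{Cons} to expose $M''\theta_i : T'_1$. The delicate point is that this decomposition only goes through if patterns are confined to variables, tuples and constructors; admitting a multiset selector $g(E)$ inside a pattern would break the inversion, since $g$ need not be injective on types, so I would either restrict the pattern grammar accordingly or argue that only selector-free patterns can arise in a matching.
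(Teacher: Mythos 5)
Your proposal takes essentially the same approach as the paper: the paper's proof is a brief sketch --- induction on the reduction rules together with the crucial lemma that structural congruence preserves typing, which is exactly your ingredient (iii) --- and your case analysis, substitution lemma, and pattern-decomposition lemma simply fill in the standard details that this sketch leaves implicit. Your plan is a correct, more detailed elaboration of the same route rather than a different one.
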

\begin{proof}
Induction in the reduction rules. A crucial lemma needed in the proof is that whenever
$N_1 \equiv N_2$, then $\Tenv \vdash N_1$ if and only if $\Tenv \vdash
N_2$. 
\end{proof}

\section{The expressive power of BBC} \label{sec:expressiveness}

There is no compositional encoding of \bbc into the $\pi$-calculus. In
the presence of broadcast this follows from the negative result due to
Ene and Muntean \cite{Ene:2001:BCC:645609.662485}. However, the result
also holds if we only consider collection. The following criteria for
compositionality are due to Palamidessi \cite{palamidessi} and Ene and
Muntean \cite{Ene:2001:BCC:645609.662485}.

\begin{definition}
An encoding $\sem{\quad}$ is \emph{compositional} if 
\begin{enumerate}
\item $\sem{N_1 \para N_2} = \sem{N_1} \para \sem{N_2}$.

\item For any substitution $\sigma$ we have $\sem{N \sigma} =
  \sem{N}\sigma$

\item If $N \ra^{*} N'$ then $\sem{N} \ra^{*} \sem{N'}$.

\item If $\sem{N} \ra^* M$ then $\sem{N} \ra^* M \ra^*
  \sem{N'}$ for some $N'$.

\end{enumerate}
\end{definition}

A central notion in the study of process calculi is that of an
\emph{electoral system}. This is a network in which the participants
perform a computation that elects a unique leader.  Following
\cite{palamidessi,Ene:2001:BCC:645609.662485}, our definition assumes
a network in which the names used are the `natural names' that
represent the identity of the $n$ processes in the network.

\begin{definition}[Electoral system]
  A network $N = P_1 \para \cdots \para P_n$ is an \emph{electoral system} if for every computation
  $C$, there exists an extension $C'$ of $C$, and an index $k \in
  \set{1,\ldots,n}$ such that for every $i \in \set{1,\ldots,n}$ the
  projection $C'_i$ contains exactly one output action of the form $k$
  and any trace of a $P_i$ may contain at most one action of the form
  $\overline{l}$ with $l \in \set{1,\ldots,n}$. 
\end{definition}
We now describe such a system in \bbc. The system uses a common
channel $a$ whose bound is $n$, where $n$ is the number of
principals. The idea is to collect names until $n$ sets of names have
been collected. The first to do so sends out a success announcement to
everyone in the form of a name $\textsf{chosen}(m)$; the collection
function $g(S)$ is defined to select the name with the minimal index
among the names of $S$ unless $S$ contains one or more occurrences of
a name of the form $\textsf{chosen}(m')$ for some $m$. Thus, the first
component to receive sufficiently many names from all other components
chooses the leader.

Our network is of the form 
\[ N = \prod^{n}_{i=1} \locate{l_i}{P_i} \para \prod^{n}_{i=1}
\prod^{m}_{j=1, j\neq i} l_i \near l_j \]
We define the $n$ components as follows:
\begin{align*}
 P_i \defeq 
& \outp{a}{i}\binp{a}{\cpatt{x.S_1}} \ldots \binp{a}{\cpatt{x.S_k}}
 \outp{a}{\textsf{chosen}(g(\set{\set{g(S_1), \ldots g(S_k)}}))}P 
\end{align*}
The selection function is defined by
\[ g(S) = \begin{cases}  \min_{x \in S} x & \text{if}\; S \subseteq
  \set{1,\ldots,n} \\
\textsf{chosen}(n) & \text{if}\; \textsf{chosen}(n) \in S 
\end{cases}
\]
As there is no such electoral system for the
$\pi$-calculus \cite{palamidessi}, we can use the same proof as
that given by Ene and Muntean to conclude that there can be no
compositional encoding of \bbc in the $\pi$-calculus. 

\section{Directions for further work}

In this paper we have presented the \bbc calculus, which is a
distributed process calculus that generalizes the $\pi$-calculus with
notions of channels with bounded forms of broadcast and collection and
an explicit notion of connectivity. 

The present work only considers barbed bisimulation; it is well-known
that this relation is not preserved by parallel composition and that
the notion of barbed congruence does not lend itself well to
coinductive proof techniques. Further work includes finding
a labelled characterization of barbed congruence; this requires a
labelled transition semantics in the spirit of \cite{Ene:2001:BCC:645609.662485}.

At present, we informally distinguish between channels for broadcast
and collection. However, if we think of channels as radio frequencies,
one would sometimes like the same frequency to be used for both
broadcast and collection. An interesting direction of work is
therefore to develop a type system that will allow the same channel to
be used non-uniformly in both modes and according to a protocol. This
is a variant of binary session types \cite{DBLP:conf/esop/HondaVK98}
that has been adapted to a setting with broadcast communication by
\cite{kouzapas2014}. A further step in this direction is to study
multiparty session types \cite{hondaPOPL} and how the projection to
local session types can be defined in the setting of \bbc.

\bibliographystyle{plain}

\begin{thebibliography}{1}

\bibitem{Ene:2001:BCC:645609.662485}
Cristian Ene and Traian Muntean.
\newblock A broadcast-based calculus for communicating systems.
\newblock In {\em Proceedings of IPDPS'1}. IEEE Computer Society,
2001. \doi{10.1007/3-540-48321-7_21}

\bibitem{DBLP:conf/esop/HondaVK98}
Kohei Honda, Vasco~Thudichum Vasconcelos, and Makoto Kubo.
\newblock Language primitives and type discipline for structured
  communication-based programming.
\newblock In {\em ESOP}, volume 1381 of {\em LNCS}, pages 122--138. Springer,
  1998. \doi{10.1007/BFb0053567}

\bibitem{hondaPOPL}
Kohei Honda, Nobuko Yoshida, and Marco Carbone.
\newblock Multiparty asynchronous session types.
\newblock In George~C. Necula and Philip Wadler, editors, {\em POPL}, pages
  273--284. ACM, 2008. \doi{10.1145/1328438.1328472}

\bibitem{kouzapas2011}
Dimitrios Kouzapas and Anna Philippou.
\newblock A process calculus for dynamic networks.
\newblock In Roberto Bruni and Juergen Dingel, editors, {\em Formal Techniques
  for Distributed Systems}, volume 6722 of {\em Lecture Notes in Computer
  Science}, pages 213--227. Springer Berlin Heidelberg, 2011. \doi{10.1007/978-3-642-21461-5_14}

\bibitem{kouzapas2014}
Dimitrios Kouzapas and Ramunas Gutkovas and Simon J. Gay. 
\newblock Session Types for Broadcasting
\newblock In Alastair F. Donaldson and Vasco
T. Vasconcelos. \emph{Proceedings 7th Workshop on Programming Language
  Approaches to Concurrency and Communication-cEntric Software,
  {PLACES} 2014}, Grenoble, France, 12 April 2014, pages 25--31. \doi{10.4204/EPTCS.155.4}

\bibitem{palamidessi}
Catuscia Palamidessi.
\newblock Comparing the expressive power of the synchronous and asynchronous
  pi-calculi.
\newblock {\em MSCS}, 13(5):685--719, 2003. \doi{10.1017/S0960129503004043}

\bibitem{Prasad95}
K.~V.~S. Prasad.
\newblock A calculus of broadcasting systems.
\newblock {\em Sci. Comput. Prog.}, 25(2-3):285--327, 1995. \doi{10.1016/0167-6423(95)00017-8}

\bibitem{DBLP:conf/popl/RielyH98}
James Riely and Matthew Hennessy.
\newblock A typed language for distributed mobile processes (extended
  abstract).
\newblock In {\em POPL}, pages 378--390, 1998. \doi{10.1145/268946.268978}

\bibitem{SanWalk} Davide Sangiorgi and David Walker. \emph{The
    $\pi$-Calculus: A Theory of Mobile Processes}, Cambridge
  University Press, 2001. ISBN 978-0-521-78177-0.

\bibitem{SRS:SCP10}
Anu Singh, C.~R. Ramakrishnan, and Scott~A. Smolka.
\newblock A process calculus for mobile ad hoc networks.
\newblock {\em Science of Computer Programming}, 75(6):440--469,
2010. \doi{10.1016/j.scico.2009.07.008}

\end{thebibliography}

\end{document}